\setlist{nosep,topsep=0pt,leftmargin=*}
\definecolor{myRed}{rgb}{0.82,0.13,0.56}
\definecolor{myBlue}{RGB}{13,55,174}
\Crefname{algocf}{Algorithm}{Algorithms}
\newcommand{\lp}{\left}
\newcommand{\rp}{\right}
\newcommand{\E}[2][]{\mathbb{E}_{#1}\!\left[#2\right]}
\newcommand{\Prob}[2][]{\mathbb{P}_{#1}\!\left[#2\right]}
\newcommand{\Exp}[1]{\exp\lp(#1\rp)}
\newcommand{\prn}[1]{\left( #1 \right)}
\newcommand{\abs}[1]{\left|#1\right|}
\newcommand{\floor}[1]{\left\lfloor #1 \right\rfloor}
\newcommand{\e}[1]{\texttt{empty}(#1)}
\newcommand{\ALG}{\textsc{Alg}}
\newcommand{\OPT}{\textsc{Opt}}
\theoremstyle{plain}
\newtheorem{theorem}{Theorem}[section]
\newtheorem{proposition}[theorem]{Proposition}
\newtheorem{lemma}[theorem]{Lemma}
\newtheorem{corollary}[theorem]{Corollary}
\theoremstyle{definition}
\newtheorem{definition}[theorem]{Definition}
\newtheorem{remark}[theorem]{Remark}
\theoremstyle{remark}
\author[1]{Andreas Kalavas\thanks{The majority of this work was carried out while the author was an intern at the Archimedes Research Unit.}}
\author[2,3]{Charalampos Platanos}
\author[2,3]{Thanos Tolias}
\affil[1]{Carnegie Mellon University, Pittsburgh, USA}
\affil[2]{National Technical University of Athens, Greece}
\affil[3]{Archimedes RU, Athena RC, Greece}
\affil[ ]{\texttt{{
\href{mailto:akalavas@andrew.cmu.edu}{akalavas@andrew.cmu.edu} \quad
\href{mailto:harrisplat@gmail.com}{harrisplat@gmail.com} \quad
\href{mailto:thanostolias@mail.ntua.gr}{thanostolias@mail.ntua.gr}}}}
\title{A Polylogarithmic Competitive Algorithm for Stochastic Online Sorting and TSP}
\date{}
\begin{document}

\maketitle

\begin{abstract}
In \emph{Online Sorting}, an array of $n$ initially empty cells is given. At each time step $t$, an element $x_t \in [0,1]$ arrives and must be irrevocably placed in an empty cell without knowledge of future arrivals. We aim to minimize the sum of absolute differences between pairs of elements placed in consecutive array cells, seeking an online placement strategy that results in a final array close to a sorted one.   
An interesting multidimensional generalization, referred to as the \emph{Online Traveling Salesperson Problem}, arises when the request sequence consists of points in the $d$-dimensional unit cube and the objective is to minimize the sum of Euclidean distances between points in consecutive cells. 
Motivated by the recent work of (Abrahamsen, Bercea, Beretta, Klausen and Kozma; ESA 2024), we consider the \emph{stochastic version} of Online Sorting (\textit{resp.} Online TSP), where each element (\textit{resp.} point) $x_t$ is an i.i.d. sample from the uniform distribution on $[0, 1]$ (\textit{resp.} $[0,1]^d$). By carefully decomposing the request sequence into a hierarchy of balls-into-bins instances, where the balls to bins ratio is large enough so that bin occupancy is sharply concentrated around its mean and small enough so that we can efficiently deal with the elements placed in the same bin, we obtain an online algorithm that approximates the optimal cost within a factor of $O(\log^2 n)$ with high probability. 
Our result comprises an exponential improvement over the previously best known competitive ratio of $\tilde{O}(n^{1/4})$ for Stochastic Online Sorting due to (Abrahamsen et al.; ESA 2024) and $O(\sqrt{n})$ for (adversarial) Online TSP due to (Bertram, ESA 2025). 
\end{abstract}

\begingroup
\renewcommand\thefootnote{}\footnotetext{
This work has been partially supported by project MIS~5154714 of the National Recovery and Resilience Plan Greece~2.0, funded by the European Union under the NextGenerationEU Program.

The authors would like to thank Dimitris Fotakis for many valuable discussions and insightful comments on this paper. They are also grateful to Marina Kontalexi for her helpful discussions.
}
\addtocounter{footnote}{0}
\endgroup

\newpage

\section{Introduction}

In the \emph{Online Sorting Problem} we are given a sequence of $n$ real numbers $x_1, x_2, \ldots, x_n \in [0,1]$, revealed one by one in an online fashion. An array $A$ of length $n$ is initially empty. Denote by $A[1], A[2], \ldots, A[n]$ its cells. Upon the arrival of each element $x_j$, the algorithm must immediately and irrevocably assign it to an empty cell of $A$. After all $n$ elements have been placed, the cost is defined as the total variation between consecutive entries \textit{i.e.}, $\text{Cost}(A) = \sum_{i=1}^{n-1}|{A[i+1] - A[i]}|$. The objective is to minimize this cost. The problem was introduced by Aamand, Abrahamsen, Beretta, and Kleist~\cite{abrSODA} as a technical tool for proving lower bounds in online strip packing, bin packing, and perimeter packing. They studied the adversarial setting, where the sequence is chosen by an adaptive adversary, and designed an $O(\sqrt{n})$-competitive algorithm along with a matching lower bound for deterministic algorithms. Later, Abrahamsen, Bercea, Beretta, Klausen, and Kozma~\cite{AbrESA} showed that even randomized algorithms cannot asymptotically improve this guarantee in the worst case.

\subparagraph*{Stochastic Online Sorting.} Motivated by the large $\Omega(\sqrt{n})$ lower bound in the adversarial setting, Abrahamsen et al.~\cite{AbrESA} introduced \emph{Stochastic Online Sorting}. There, the input elements $x_1, \dots, x_n$ are drawn \textit{i.i.d.}~from the uniform distribution $\mathcal{U}(0, 1)$ and the algorithm seeks to minimize the cost incurred. Note that in the stochastic online sorting problem it is known that the cost of the offline optimal solution $\OPT$ is approximately equal to one for large enough $n$. Throughout this paper, we say that an algorithm for stochastic online sorting is $c$-competitive if it achieves cost at most $c\cdot \OPT$ with high probability\footnote{Throughout this work, “with high probability” means with probability at least $1 - 1/n$.}. Abrahamsen et al.~\cite{AbrESA}, designed a $\widetilde{O}(n^{1/4})$-competitive algorithm, demonstrating that probabilistic assumptions can lead to significant improvements over the adversarial baseline. 

\subparagraph*{Online TSP.} 
Abrahamsen et al.~\cite{AbrESA} further generalized the problem by increasing its dimensionality, introducing the \emph{Online Traveling Salesperson Problem}. Here the input elements $x_1, \dots, x_n$ are points in $[0,1]^d$, and the objective is to minimize the total Euclidean distance between consecutive points in the array. In the adversarial setting, they gave a dimension-dependent algorithm with competitive ratio $\sqrt{d}\cdot 2^d \cdot O(\sqrt{n \log n})$. Shortly after, Bertram~\cite{bertESA} showed that dimension-independent bounds are achievable in the adversarial model presenting an $O(\sqrt{n})$-competitive algorithm. To the best of our knowledge, the competitive ratio of the stochastic version of this problem has not been studied.

\subsection{Our Contributions} 
In this work, we present a unified algorithmic framework, \Cref{theAlg}, that applies to both stochastic online sorting and its multidimensional generalization, stochastic online TSP. Our framework carefully decomposes the input sequence into a series of balls-into-bins instances to achieve an $O(\log^{2} n)$ approximation with high probability in both settings. We formally state our two main theorems below.

\begin{restatable}{theorem}{sorting} \label{thm:1}
    \Cref{theAlg} achieves a cost of at most $O(\log^2 n)$ with high probability for the stochastic online sorting problem.
\end{restatable}

\begin{restatable}{theorem}{tsp} \label{thm:2}    \Cref{theAlg} achieves a cost of at most $O(\log^2 n) \cdot \OPT$ with high probability for the stochastic online TSP problem.
\end{restatable}

\subparagraph*{Independent Work.}

In a recent parallel and independent work, Hu~\cite{hu2025} studied Stochastic Online Sorting and obtained a deterministic algorithm with expected cost at most $\log n \cdot 2^{O(\log^\ast n)}$ and an elegant lower bound of $\Omega(\log n)$ on the expected cost of any randomized algorithm. Regarding high-probability bounds, which we aim to derive in this work, Hu \cite[Theorem~1.3]{hu2025} presented an $O(\log^{20} n)$-competitive algorithm.
 
\subsection{Technical Overview}

\subparagraph*{An $O(\log^2 n)$-Competitive Algorithm for Stochastic Online Sorting.}

The first main idea that Abrahamsen et al.~\cite{AbrESA} introduced is to divide the array into buckets / bins, exploit the randomness of the input in order to assign each element to a bucket, and solve the problem separately in each bucket. Suppose that each bucket has $C$ elements; thus, we have $n/C$ buckets, and the bucket $i$ contains elements in $((i-1)\frac{C}{n}, i\frac{C}{n}]$. Then, assuming that each bucket will receive exactly the expected number of elements, \textit{i.e.} $C$ elements, the cost of the algorithm is approximately $\ALG = \text{number of buckets}\times \text{cost inside bucket} = O(\sqrt{C})$, assuming that the $O(\sqrt{n})$ adversarial algorithm of Aamand et al.~\cite{abrSODA} was employed in each bucket. 

Hence, smaller buckets reduce cost but increase the chance of imbalance, while larger buckets stabilize the load at the expense of higher internal cost. This trade-off is central to their analysis. To handle the imbalances, a part of the array in the end is reserved, also called backyard, and its size depends on how large these imbalances are. Since elements inserted in the backyard are arbitrary, the algorithm incurs a cost equal to the square root of its size. Thus, the previous trade-off translates into a trade-off between the size of the backyard and $C$: the larger $C$ is, the smaller the backyard and vice versa. Their first algorithm balances this trade-off and achieves $\widetilde{O}(n^{1/3})$ complexity. Recursively using their algorithm inside the buckets, they improve their competitive ratio to $\widetilde{O}(n^{1/4})$.

Our approach to Stochastic Online Sorting can be naturally described through the lens of the balls-into-bins paradigm. More specifically, the algorithm proceeds by maintaining a partitioning of a certain part of $A$ into equally sized \emph{bins}, while the arriving elements can be regarded as \emph{balls} placed into bins. Each bin is associated with a certain subinterval of $[0,1]$, which determines if a new ball / element is placed into the particular bin. The subintervals associated with the bins form a partitioning of $[0,1]$ and are of equal length. Hence, since the elements are \textit{i.i.d.} samples from the uniform distribution on $[0,1]$, each new ball is placed uniformly at random (and independently of the other balls) in each bin. 

Interpreting the $\widetilde{O}(n^{1/3})$-competitive algorithm of Abrahamsen et al. \cite{AbrESA} through the balls-into-bins framework above, we realize that the main limitation of their approach is that they consider a single balls-into-bins instance, which has very large bins and is fixed at the beginning of the algorithm. Therefore, the backyard of $A$ must be large enough, in order to accommodate the significant imbalances in the bins' occupancy. 

Our key new technical insight is that we can decompose an instance of Stochastic Online Sorting into a hierarchy of balls-into-bins instances (\textit{a.k.a.} \emph{phases}) of geometrically decreasing size. Crucially, we first distinguish between \emph{buckets} and \emph{bins}. A bucket is a contiguous set of cells corresponding to a single interval of the domain, while a bin is a collection of buckets that together cover the same interval but need not be contiguous in the array.  It is important, that we select the bucket sizes dynamically, when we proceed from one phase to the next, in order to create same sized bins to deal with any imbalances (\textit{i.e.}, empty cells in some buckets) left from the previous phase. 

Our algorithm (see also Section~\ref{sec:alg} and Algorithm~\ref{theAlg} for the details) maintains that all bins in the same phase are of equal size (and have equal probability of receiving a new element) and all bins created by the algorithm have size $\Theta(\log^2 n)$ (though the bucket size may slightly vary between phases). In the first phase, we consider the first $n/2$ cells of $A$, which we partition into $n/(2\log^2 n)$ buckets (we also partition $[0,1]$ into the same number of subintervals with equal length). The current phase ends when the first of its buckets becomes full. A standard concentration bound (Lemma~\ref{lem:balls-into-bins}) shows that \textit{w.h.p.} before the first phase ends: \textit{(i)} at least $(1 - o(1))\frac{n}{2}$ elements are successfully placed into its bins; and \textit{(ii)} every bin has received at least $(1 - o(1))\log^2 n$ elements. Using the (adversarial) $O(\sqrt{n})$-competitive algorithm of Aamand et al.~\cite{abrSODA} to deal with the exact placement of the elements in the same bin we get that \textit{w.h.p.} the total cost of the algorithm during its first phase is $O(\log n)$. 

In the second phase, we consider the next $n/4$ cells of $A$ which are partitioned into $n/(4\log^2 n)$ buckets (as before, we partition $[0,1]$ into the same number of subintervals with equal length). The bin size is slightly larger than in the first phase (but again at most $(1+o(1))\log^2 n$), because in the bins we also include cells from buckets left empty in the first half of $A$ from the previous phase. As above, \textit{w.h.p.} before the second phase ends: \textit{(i)} at least $(1 - o(1))\frac{n}{4}$ elements are successfully placed into its bins; \textit{(ii)} every bin has received at least $(1 - o(1))\log^2 n$ elements; and \textit{(iii)} the total cost of the algorithm for its second phase is $O(\log n)$. Moreover, we show that due to property \textit{(ii)} (combined for the first and the second phase), \textit{w.h.p.} before the second phase ends, all cells in the first half of $A$ (\textit{i.e.}, the cells considered in the first phase) are full. So, we maintain the invariant that any imbalances in bucket occupancy left from the first (\textit{resp.} any) phase do not carry over to the phases after the second (\textit{resp.} next one). 

Using the steps and the invariants (and carefully defining the exact quantities hidden in the $o(1)$ notation) above, our algorithm proceeds from one phase to the next, for $O(\log n)$ phases, until we are left with a single bucket of size $\Theta(\log^2 n)$. The total cost is dominated by the algorithm's total cost for the different phases and is $O(\log^2 n)$  \textit{w.h.p.}.

\subparagraph*{Extension to Stochastic Online TSP.} 
When extending to higher dimensions, three challenges arise. First, unlike in the one-dimensional setting, there is no simple closed-form expression for $\OPT$, and we must instead rely on getting a good estimate of $\OPT$ using concentration bounds. Second,  the domain must be partitioned into blocks that both \textit{(i)} have equal probability mass, to preserve the balls-into-bins property, and \textit{(ii)} contain points that are sufficiently close so that the intra-block cost is negligible compared to $\OPT$. Third, there is no obvious ordering of the blocks that guarantees low inter-block cost. To overcome these difficulties, we exploit properties of the uniform distribution, which allows us to partition the space into hyperboxes of similar geometry. We then define a tour that visits the input points block by block, with consecutive blocks chosen to be adjacent. The ordering of blocks is inspired by space-filling curves, a tool that has been widely used in the study of universal TSP (see \cite{space1, space2, space3, chrissgour}). We prove that this structured tour is within a constant factor of the optimal TSP tour \textit{w.h.p.}, and we adapt our algorithmic framework to approximate it using Bertram's algorithm~\cite{bertESA}. As a result, we retain the $O(\log^2 n)$ guarantee in higher dimensions.

\subsection{Other Related Work}

\subparagraph*{Online Sorting with Larger Arrays.} Another interesting variant of the \emph{Online Sorting Problem} is one where the size of the array $m$ is longer than the input sequence, \textit{i.e.}, $m > n$. This version was introduced by Aamand et al.~\cite{abrSODA}, who designed a deterministic $2^{\sqrt{\log n}\sqrt{\log\log n + \log(1/\epsilon)}}$-competitive algorithm when $m = (1+\epsilon)n$. They complemented this with a lower bound, showing that every deterministic algorithm with $m = \gamma n$ is at least $1/\gamma \cdot \Omega(\log n / \log \log n)$-competitive. Later, Azar et al.~\cite{azar2025} and, independently and concurrently, Nirjhor et al.~\cite{nirjhor2025} improved the upper bound, with the former nearly resolving this variant.

\subparagraph*{Hashing Schemes.}
The connection between stochastic online sorting and hashing was already observed by Abrahamsen et al.~\cite{AbrESA}. Two particularly relevant examples are \emph{Filter Hashing}, introduced by Fotakis et al.~\cite{fotakis2003}, and \emph{Transactional Multi-Writer Cuckoo Hashing}, proposed by Kuszmaul~\cite{kuszmaul2016}. Both hashing schemes employ a multi-layered structure where each subsequent layer is tasked with handling the imbalances of the previous layers.

\subparagraph*{Variants of Online TSP.}
Online TSP has also been studied in different contexts. A notable line of work interprets it as a scheduling problem: points (or requests) appear online in a metric space, and the objective is to minimize the time until all points have been visited~\cite{ausiello99}. In contrast, our setting focuses on minimizing the total length of the tour, rather than the completion time.

\section{General Algorithmic 
Framework} \label{sec:alg}

Before we present our algorithm, we introduce some notation. For an array $A$ we denote by $A[s:t]$ the subarray of $A$ starting at $s$ and ending at $t$. We also adopt a slight abuse of notation by using $A$ to refer both to the array itself and to its length, with the intended meaning being clear from the context. We also define by $[k]$ the set $\{1, 2, \ldots, k\}$. We call a subarray of a subarray of $A$ a \emph{sub-subarray} (or \emph{bucket}).

\subsection{Algorithm Description}
We present an online sorting algorithm for the general setting in which we are given an array $A$ of length $n$ and a distribution $\mathcal{D}$ over a domain $\mathcal{S}$, with sample access provided by \texttt{receive\_sample}.  We also assume access to the following  subroutines.
\subparagraph*{Subroutines.}
\begin{itemize}
    \item \underline{$\texttt{DomainPartitioning}\lp(\mathcal{D}, \mathcal{S}, \ell\rp)$:} This procedure takes as input a distribution $\mathcal{D}$ over a domain $\mathcal{S}$ and a positive integer $\ell$.  
    It outputs subsets $
    \bigl\{ T^{(i)}_j \subseteq \mathcal{S} : i \in [\ell+1],\; j \in [2^{\ell-i+1}] \bigr\},$ collectively denoted by $\mathcal{T}$.  
    The collection $\mathcal{T}$ satisfies, for every level $i \in [\ell+1]$:
    \begin{enumerate}
      \item \textit{Covering:} $\bigcup_{j \in [2^{\ell-i+1}]} T^{(i)}_j = \mathcal{S}.$
      \item \textit{Disjointness:} The sets $\{T^{(i)}_j\}_{j \in [2^{\ell-i+1}]}$ are pairwise disjoint.
      \item \textit{Equal Mass:} $\Prob[x \sim \mathcal{D}]{x \in T^{(i)}_j}
        \text{ is the same for all } j \in [2^{\ell-i+1}].$
      \item \textit{Laminarity:} $\mathcal{T}$ forms a \emph{laminar family}---every pair of sets in $\mathcal{T}$ is either disjoint or one is contained in the other.
    \end{enumerate}
    Thus $\mathcal{T}$ defines a hierarchical, binary-tree partition of $\mathcal{S}$, consisting of $2^{\ell+1}-1$ subsets in total.
    \item \underline{$\texttt{InBucketPlacement}(x, a)$:} This function takes as input an element $x \in \mathcal{S}$ and places it into an empty cell of sub-subarray $a$ of $A$. 
    \item \underline{$\texttt{index}(\mathcal{T}, i, x)$:} This function takes as input a set $\mathcal{T}$ outputted by $\texttt{DomainPartitioning}$, a positive integer $i$ and an input element $x$ and outputs the unique index $k$ such that $x \in T^{(i)}_k$.
    \item \underline{\texttt{empty}($a$):} This function takes as input a (sub-)subarray $a$ of $A$ and returns the number of empty cells in $a$ at the current point of execution.
\end{itemize}

We also introduce the following definition: 

\begin{definition}
For a subarray $A_i$, let $N_i \coloneqq \texttt{empty}(A_i)$ at the moment when the algorithm transitions from phase $i$ to phase $i+1$, i.e., when some bucket of $A_i$ becomes full and triggers an overflow.
\end{definition}

The complete pseudocode is given in \Cref{theAlgFull}; 
a more high-level version is presented in \cref{theAlg}. We now describe the algorithm step by step.

\begin{algorithm}[ht]
    \caption{General Algorithmic Framework}\label{theAlg} 
    \KwData{Array $A[1:n]$, Distribution $\mathcal{D}$ over a domain $\mathcal{S}$}
    \KwResult{\texttt{success} or \texttt{fail}}
    $i \gets 0;$ \\  
    $\ell \gets \floor{\log \lp( \frac{n}{2\log^{2}n}\rp)}$, $K \gets 2^{\ell};$ \\  
    $\mathcal{T} \gets \texttt{DomainPartitioning}\lp(\mathcal{D}, \mathcal{S}, \ell \rp);$   \\
    \While{\textnormal{\texttt{$A$ is not full}}}{
        $i \gets i+1$; $K_i \gets \frac{K}{2^{i-1}};$ \\  
        \texttt{array $A_i \gets$ allocate $\frac{n}{2^i}$ cells next to $A_{i-1}$ } \Comment*[r]{$A_0 \rightarrow$ array start}
        $C_i\gets \frac{A_{i}+N_{i-1}}{K_i}$ \Comment*[r]{set the capacity of each bin}
        $\texttt{for each }j\in [K_i],\ c^{(i)}_j = C_i-\texttt{empty}\lp(a^{(i-1)}_{2j-1}\rp)-\texttt{empty}\lp(a^{(i-1)}_{2j}\rp)$ \label{line:capacities}\\
        \If{$\textnormal{\texttt{there exists $j$, s.t.}}\ c^{(i)}_j \leq 0$}{\textbf{return} \texttt{fail}} \label{line:fail1a}
        \texttt{divide $A_i$ into $K_i$ consecutive buckets $a^{(i)}_j$ of sizes $c^{(i)}_j$, $j\in[K_i]$;} \\
        \If{\textnormal{$\texttt{remaining array places} \leq 100\log^2n$}}{$K_i = 1$, $A_i = a^{(i)}_1 = \texttt{remaining array places}$ \Comment*[r]{the last phase}}
        \While{\textnormal{\texttt{there does not exist$\ j$ s.t.$\ a^{(i)}_j$ \text{is full}}}}{
            $x\gets \texttt{receive\_sample}(\mathcal{D})$; \label{line:while}\\
            $k_1 \gets \texttt{index}(\mathcal{T}, i-1, x)$, $k_2 \gets \texttt{index}(\mathcal{T}, i,x)$; \\
            \eIf{\textnormal{\texttt{$a^{(i-1)}_{k_1}$ is not full}}}
            {$\texttt{InBucketPlacement}\lp(x, a^{(i-1)}_{k_1}\rp)$;}
            {$\texttt{InBucketPlacement}\lp(x, a^{(i)}_{k_2}\rp)$;}
            
        }
        \If{\textnormal{\texttt{$A_{i-1}$ is not full}}}{\textbf{return} \texttt{fail}} \label{line:fail2a}
        
    }
    \textbf{return} \texttt{success}
\end{algorithm}

\subparagraph*{Initialization.} First, we define $\ell$ as the unique positive integer such that $\frac{n}{4\log^{2}n}< 2^\ell \leq \frac{n}{2\log^{2}n}$. The initial number of buckets for the first subarray and balls-into-bins instance is $K = 2^\ell$. Next, we partition the domain $\mathcal{S}$ using the subroutine \texttt{DomainPartitioning}, which provides the partitioning of the domain into subsets (intervals / blocks) based on which we determine the bucket each element $x \in \mathcal{S}$ is inserted to. After completing the initializations, we move on to the first phase of the algorithm.

\subparagraph*{Phase 1.} Let $A_1$ be the subarray containing the first $\floor{\frac{n}{2}}$ elements of the original array. In the first phase we will place elements inside this subarray using a hash-based logic. We divide the $A_1$ elements into $K_1 = 2^\ell$ contiguous buckets, of size $C_1 = \frac{A_1}{K_1} = \Theta (\log^2 n)$ elements each. We place an arriving element $x\sim \mathcal{D}$ in the $k$-th bucket of $A_1$ if and only if $x \in T^{(1)}_k$, and then use the subroutine \texttt{InBucketPlacement} to position it within the bucket.

By definition of $\mathcal{T}$, each sample has equal probability of being placed into each bucket. Hence, placement of elements into the buckets of $A_1$ at the first phase of our algorithm can be thought as a balls-into-bins instance where we throw balls into $K_1$ bins, uniformly at random. When an element arrives and its designated bucket is full, we say that an overflow has occurred, triggering a transition to phase $2$.

\subparagraph*{Phase 2.} In order to handle overflows from subarray $A_1$ we allocate a subarray $A_2$ of size $\floor{\frac{n}{4}}$ next to $A_1$. Subarray $A_2$ is divided into $K_2 = K_1/2 = 2^{\ell-1}$ buckets. Regarding bucket sizes, crucially, the size of each bucket of $A_2$ is not the same, as discussed in the technical overview. To gain some intuition for the bucket sizes set in \Cref{line:capacities}, we consider the following example.

\subparagraph*{\textit{Example.}} Suppose for simplicity's sake, that we are on the classic online sorting case and subarray $A_1$ has $100$ cells and is divided into $K = K_1 = 4$ buckets, with capacity $C_1 = 25$ each, such that $T^{(1)}_1 = (0, 0.25),\ T^{(1)}_2 = (0.25, 0.5),\ T^{(1)}_3 =  (0.5, 0.75),\ T^{(1)}_4 = (0.75, 1)$ respectively. Assume that bucket $a^{(1)}_1$ is overflowed and the number of elements in each bucket are $(25, 20, 10, 15)$ at the time exactly before the overflow. Then, the up to this point empty, subarray $A_2$ will be brought into action to take care of the overflows of subarray $A_1$. By design, $A_2$ has $50$ cells and $K_2 = K_1/2= 2$ buckets that handle elements $T^{(2)}_1 = (0, 0.5),\ T^{(2)}_2 = (0.5, 1)$ respectively. If we allocate $25$ cells to each bucket in $A_2$, then we expect the first one to overflow significantly faster than the second one since in $A_1$ there are less empty cells for elements lying in $(0,0.5)$ than for elements lying in $(0.5,1)$. Thus, in order to handle the imbalances in $A_1$ we will allocate more cells in the first bucket in $A_2$. The total number of empty cells in $A_1$ is $\texttt{empty}(A_1) = 0+5+15+10 = 30$ and since we add $50$ cells with $A_2$ we have in total $80$ cells---this is the size of the second balls-into-bins instance. Specifically, we can consider that the instance has $2$ bins that handle elements from $(0, 0.5)$ and $(0.5, 1)$ respectively. The first bin contains the first two buckets of $A_1$ and the first bucket of $A_2$. The second bin contains the last two buckets of $A_1$ and the second bucket of $A_2$. We would like the capacities of the bins to be close enough in order to have a balanced balls-into-bins instance, thus equal to $\frac{\texttt{empty}(A_1) + A_2}{K_2} = 40$. Thus, we allocate $c^{(2)}_1 = 40-\texttt{empty}(a^{(1)}_1)-\texttt{empty}(a^{(1)}_2) = 35$ cells for the first bucket in $A_2$ and $c^{(2)}_2 = 40-\texttt{empty}(a^{(1)}_3)-\texttt{empty}(a^{(1)}_4) = 15$ cells for the second one. As a result, we obtain a new balanced balls-into-bins instance: the bins have equal capacity, so the expected overflow occurs later than it would if $A_2$ were simply divided into two equal-sized buckets.

This illustrates how $A_2$’s bucket sizes compensate for imbalances in $A_1$. We now return to the formal description of phase 2. The capacity of each bin in the second balls-into-bins instance is thus:

$$C_2 = \frac{\text{size of instance}}{\text{number of bins}}=\frac{A_2 + \e{A_1}}{K_2} = \Theta(\log^2 n)$$
(see also \Cref{lem:caplim}) and each bucket has capacity as defined in \Cref{line:capacities} in order to create a balanced instance, \textit{i.e.} $c^{(2)}_j = C_2-\texttt{empty}\lp(a^{(1)}_{2j-1}\rp)-\texttt{empty}\lp(a^{(1)}_{2j}\rp)$. If we cannot allocate such bucket size we say that our algorithm \emph{failed}. 

Regarding element placement, each bucket $j$ in $A_2$ accepts elements from the subset $T^{(2)}_j$. As elements arrive online, we first attempt to place them in their designated bucket in $A_1$; if it is full, we place them in the corresponding bucket in $A_2$. When an element arrives and its designated bucket in $A_1$ and $A_2$ is full, we transition to phase $3$ and we say that a bucket in array $A_2$ overflowed. Crucially, if $A_1$ is not full at this point, we say that the algorithm has \emph{failed}. In \Cref{sec:fail}, we show that our algorithm does not fail with high probability. 

\subparagraph*{Phases 3 and Beyond.} The same logic extends to all subsequent phases, while the main invariant remains: each subarray fills before a bucket of the next subarray overflows, otherwise our algorithm fails. When the remaining places in the array are less than $100\log^2 n$ we transition to the last phase of our algorithm. Note that for $C_i$ in each phase it holds (its proof is in the Appendix): 
\begin{lemma}\label{lem:caplim}
    For each phase $i$, it holds that the capacity $C_i$ of the $i$-th balls-into-bins instance is $\Theta(\log^2 n)$.
\end{lemma}
\subparagraph*{Final Phase.} In the final phase, the last subarray, denoted by $B$, is a single bucket, and elements are inserted using \texttt{InBucketPlacement}. Due to the above-discussed invariant, when $B$ overflows (\textit{i.e.}, when it is completely filled, as it has only one bucket), all previous subarrays are full. Thus, we have placed all elements into the array and the algorithm returns \texttt{success}. 

\subparagraph*{Number of Phases.} \label{lemma:B} Let $R$ be the number of phases of our algorithm and let $B$ be the last
subarray $A_R$. For $R$ it holds that $B + \sum_{i=1}^{R-1} \floor{\frac{n}{2^i}} = n$ with $B \le 100 \log^2 n$. Hence, 
\begin{align*}
    \sum_{i=1}^{R-1} \frac{n}{2^i} \geq n-100\log^2n \Rightarrow n(1-2^{-R}) \geq n-100\log^2n \Rightarrow R \geq \log \prn{\frac{n}{100\log^2 n}}
\end{align*}
and since $n-\sum_{i=1}^{R-2} \floor{\frac{n}{2^i}} > 100\log^2n$ we get that $R\leq 1+\log \prn{\frac{n}{100\log^2 n}}$, thus $R = \Theta(\log \frac{n}{\log^2n})$. Also, the size of $A_{R-1}$ is $n/2^{R-1} = \Theta(\log^2 n)$ and for $B$ it holds that $B \geq n-n(1-2^{-R}) = \Theta(\log^2 n)$, thus $B$ and $A_{R-1}$ are asymptotically tight, preserving the invariant.

\subsection{Probability of failure}
\label{sec:fail}

Note that sometimes our algorithm might fail. In this section we upper bound the probability of failure of our algorithm. There are two cases where our algorithm could fail. The first one is at line 10 of \Cref{theAlg} and the other is at line 23. To show that the failure probability $\Prob{\text{fail}}$ is small, we first present the following lemma, which is central to the analysis of the balls-into-bins instances arising in our problem.

\begin{lemma} \label{lem:balls-into-bins} Suppose a balls-into-bins instance with $K$ total bins, each with capacity $C = \Theta(\log^{2} n)$, and let $M = K\cdot C \leq n$ be the total number of balls that all bins can collectively hold. If the probability that an element belongs to each bin is the same, then the first overflow of a bin happens after $M-\frac{M}{\Theta(\log^{1/2} n)}$ balls are thrown w.p. at least $1-K/n^{2}$. Moreover, after throwing $M+\frac{M}{\Theta(\log^{1/2} n)}$ balls, all bins are full w.p. at least $1-K/n^{2}$. 
\end{lemma}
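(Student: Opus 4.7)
By hypothesis, each ball lands independently and uniformly in one of the $K$ bins, so the count $X_i$ of balls in bin $i$ after $N$ throws is distributed as $\mathrm{Binomial}(N,1/K)$, with mean $N/K$. My approach in both parts is to apply a multiplicative Chernoff bound to a single bin with a deviation parameter $\delta = \Theta(1/\sqrt{\log n})$, and then take a union bound over the $K$ bins. The role of the parameter $\Theta(\sqrt{\log n})$ hidden in the lemma statement is precisely to control the tightness of this Chernoff bound, so I would treat its hidden constant as a free parameter to be calibrated at the end.

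For the first statement, I would set $N = M - M/(\alpha\sqrt{\log n})$ for a constant $\alpha$ to be chosen later. Then $\mu := \mathbb{E}[X_i] = N/K = C\bigl(1 - 1/(\alpha\sqrt{\log n})\bigr)$, and bin $i$ overflows precisely when $X_i > C = (1+\delta)\mu$, where $\delta = \Theta(1/\sqrt{\log n})$. The standard upper Chernoff tail then gives
\[
\Pr[X_i > C] \;\leq\; \exp\bigl(-\delta^{2}\mu/3\bigr) \;=\; \exp\bigl(-\Theta(C/\log n)\bigr) \;=\; \exp(-\Theta(\log n)),
\]
using $C = \Theta(\log^{2} n)$. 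By taking $\alpha$ small enough (equivalently, enlarging the hidden $\Theta(\sqrt{\log n})$ constant in the lemma), this probability is at most $1/n^{3}$, and a union bound over the $K$ bins yields a total failure probability of at most $K/n^{3} \leq K/n^{2}$, as required.

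For the second statement I would instead set $N = M + M/(\beta\sqrt{\log n})$, so that $\mu := N/K = C\bigl(1 + 1/(\beta\sqrt{\log n})\bigr)$, and the event that bin $i$ is not yet full becomes $\{X_i < C\} = \{X_i < (1-\delta')\mu\}$ with $\delta' = \Theta(1/\sqrt{\log n})$. The lower Chernoff tail then gives
\[
\Pr[X_i < C] \;\leq\; \exp\bigl(-\delta'^{2}\mu/2\bigr) \;=\; \exp(-\Theta(\log n)) \;\leq\; 1/n^{3},
\]
after choosing $\beta$ appropriately, and the same union bound over $K$ bins closes the argument.

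The computation itself is entirely standard, so there is no real obstacle; the only delicate point is calibrating the hidden $\Theta(\sqrt{\log n})$ constant so that $\delta^{2}C$ exceeds $3\log n$. This calibration also explains the choice of bucket capacity in the algorithm: $C = \Theta(\log^{2} n)$ is exactly the smallest capacity at which a relative deviation of order $1/\sqrt{\log n}$ still produces an inverse-polynomial Chernoff tail strong enough to absorb the union bound over $K \leq n$ bins, while simultaneously keeping the absolute slack $M/\sqrt{\log n}$ much smaller than $M$, which will be crucial when summing failure margins over the $O(\log n)$ phases.
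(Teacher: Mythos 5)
Your proposal is correct and follows essentially the same route as the paper: a multiplicative Chernoff bound on the binomial count of a single bin with relative deviation $\delta = \Theta(1/\sqrt{\log n})$, giving a per-bin failure probability of $\exp(-\Theta(\delta^2 C)) = \exp(-\Theta(\log n))$, followed by a union bound over the $K$ bins. The only differences are cosmetic (you target $1/n^3$ per bin versus the paper's $1/n^2$, and you use the $\exp(-\delta^2\mu/3)$ form of the upper tail where the paper uses $\exp(-\delta^2\mu/(\delta+2))$), and your closing observation about why $C = \Theta(\log^2 n)$ is the right capacity matches the paper's design rationale.
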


The proof of the lemma is deferred to the Appendix. We proceed by defining three types of events such that, conditioning on them, our algorithm cannot fail. We will conclude our proof that $\Prob{\text{fail}}$ is small by showing that these events occur with high probability. 

\begin{definition} \label{def:events}
    Consider the creation of the $i$-th balls-into-bins instance during the algorithm's execution. Denote by $M_i \coloneqq A_i + N_{i-1}$ its size and by $C_i$ its bins' capacity. We define the following events:
    \begin{itemize}
        \item $\mathcal{O}_i$: $A_i$ overflows after $M_i - \frac{M_i}{\Theta(\log^{1/2} n)}$ balls have been thrown in this instance
        \item $\mathcal{F}_i$: $A_i$ is fully filled after $M_i  + \frac{M_i}{\Theta(\log^{1/2} n)}$ balls have been thrown in this instance
        \item $\mathcal{G}_i$: each bin of the instance has received at least $C_i-\frac{C_i}{\Theta(\log^{1/3} n)}$ balls after $M_i - \frac{M_i}{\Theta(\log^{1/2} n)}$ balls have been thrown in this instance
    \end{itemize}
    and let $\mathcal{E}_i = \mathcal{O}_i\cap\mathcal{F}_i\cap \mathcal{G}_i$ and $\mathcal{E} = \cap_i \mathcal{E}_i$. 
\end{definition}

We now handle the first case of failure by proving the following lemma (the proof is in the Appendix). 

\begin{lemma} \label{lem:fail1}
    Conditioned on $\mathcal{E}$, it holds that $\forall j$, $C_i > \textnormal{\texttt{empty}}\lp(a^{(i-1)}_{2j-1}\rp)+\textnormal{\texttt{empty}}\lp(a^{(i-1)}_{2j}\rp),$
    for each phase $i$ of our algorithm.
\end{lemma}

We now proceed by handling the second case of failure. First we introduce some definitions. 

\begin{remark}
    Note that for phase $i$, the size of its balls-into-bins instance is $M_i = A_i+N_{i-1}$.
\end{remark}

\begin{definition}
    For each phase $i$, define as $T_i$ the number of elements inserted during phase $i$ until the first overflow in $A_i$ and $T_i'$ as the number of elements inserted since the beginning of phase $i$ until $A_i$ becomes full.
\end{definition}

We continue by proving the following lemma. 

\begin{lemma}
    Conditioned on $\mathcal{E}$, for each phase $i$, when the $i$-th balls-into-bins instance overflows, subarray $A_{i-1}$ is full. 
\end{lemma}

\begin{proof}
    We proceed using strong induction (see also \Cref{fig:timeline} for a visual representation of the proof).

\textbf{Base case, $i = 1$:} We aim to show that the first subarray $A_1$ is filled before any overflow occurs in the second balls-into-bins instance, that is, $T_2\geq T_1'-T_1$, conditioned on $\mathcal{E}$. For the first phase, it holds $T_1 = A_1- N_1$ (there is no previous phase). Thus, the size of the second balls-into-bins instance is $A_2+N_1$. Now, using \Cref{def:events}, observe that:
\begin{align*}
    T_2 - (T_1'-T_1) &\geq M_2 - \frac{M_2}{\Theta(\log^{1/2} n)} - \prn{M_1+\frac{M_1}{\Theta(\log^{1/2} n)} - A_1 + N_1} \\
    &\geq A_2+N_1 - \frac{A_2+N_1}{\Theta(\log^{1/2} n)} - \prn{A_1+\frac{A_1}{\Theta(\log^{1/2} n)} - A_1 + N_1} \\
    &= A_2 - \frac{A_2+N_1}{\Theta(\log^{1/2} n)} - \frac{A_1}{\Theta(\log^{1/2} n)} \\ 
    &> A_1/4 - o(A_1) \geq 0,
\end{align*}
since $A_2 > A_1/4$, thus conditioned on $\mathcal{E}$ we obtain the desired result, proving the base case. 

\textbf{Induction step:} Assume the statement holds for each $i\in [r-1]$. We aim to prove that it also holds for $i = r$. From induction hypothesis, since we condition on $\mathcal{E}$, every subarray $ A_j $ for $ j \in [r-1] $ becomes completely filled before any bucket in $ A_{j+1} $ overflows. As a result, when phase $r+1$ starts, every subarray $ A_j $ for $ j \in [r-1]$ is filled. This is crucial, since it implies that every new sample that arrives is inserted in this balls-into-bins instance; thus we can use \Cref{def:events} (otherwise the elements inserted into the instance are not necessarily uniform in bins). The size of the balls-into-bins instance of phase $r+1$ is $A_{r+1}+N_r$. Now, using \Cref{def:events} again, observe that:
\begin{align}
    T_{r+1}-(T_{r}'-T_{r}) &\geq  M_{r+1} - \frac{M_{r+1}}{\Theta(\log^{1/2} n)}-\prn{M_r+ \frac{M_r}{\Theta(\log^{1/2} n)}-M_r+ \frac{M_r}{\Theta(\log^{1/2} n)}} \\ 
    &=  A_{r+1}+N_r - \frac{A_{r+1}+N_r}{\Theta(\log^{1/2} n)}- 2\cdot\frac{A_r+N_{r-1}}{\Theta(\log^{1/2} n)} \\ 
    &\geq  A_{r+1}- 6\cdot\frac{A_{r-1}}{\Theta(\log^{1/2} n)} \qquad \text{(since $A_{r+1} < A_{r} < A_{r-1}$ and $A_i > N_i$)} \\ 
    &\geq A_{r-1}/8-o(A_{r-1}) \geq 0,
\end{align}

where we have used that $A_i = \Theta(A_{i-1})$, completing the proof of the induction step and thus proving the lemma.
\end{proof}

\begin{figure}[t]
    \centering
    \includegraphics[width=1\linewidth]{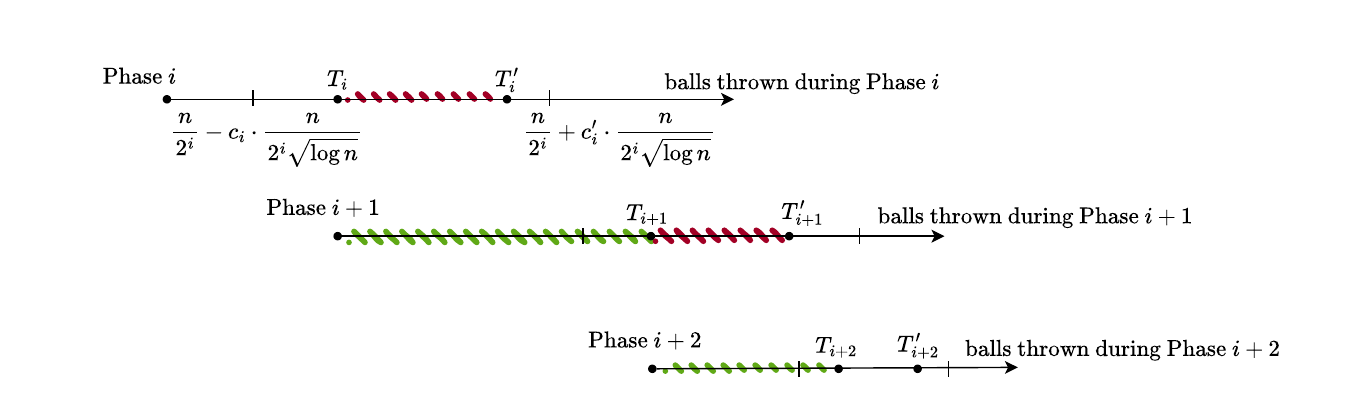}
    \caption{A timeline of the phases of our algorithm during execution. Note how the inequality $T_i'-T_i\leq T_{i+1}$ is preserved.} 
    \label{fig:timeline}
\end{figure}

We establish the following lemma, whose proof is deferred to the appendix.

\begin{lemma} \label{lem:E}
    It holds that $\Prob{\neg \mathcal{E}} \leq \frac{1}{n}.$
\end{lemma}

Since $\mathcal{E}$ implies that our algorithm will not fail we get that our algorithm does not fail with high probability since: 
    
$$\Prob{\text{fail}} \leq \Prob{\neg \mathcal{E}}<\frac{1}{n}$$

\begin{remark}
    For the full \Cref{theAlgFull} where we use $\widetilde{C}_i$ some buckets can have capacity $\floor{C_i}$ and others $\floor{C_i}+1$, thus $\widetilde{C}_i \geq \floor{C_i}$. For the time of the first overflow, from \Cref{lem:balls-into-bins} we get that w.p. $1-\frac{K_i}{n^2}$: $T_i \geq K_i\floor{C_i}-\frac{K_i\floor{C_i}}{\Theta(\log^{1/2} n)} \geq K_iC_i-K_i-\frac{K_iC_i}{\Theta(\log^{1/2} n)} \geq M_i - \frac{M_i}{\Theta(\log^{1/2} n)}$. 

    Similarly, for the other event, the capacity of each bin in the $i$-th balls-into-bins instance is at most $1+\floor{C_i}$. Thus for the time where all buckets are full, from \Cref{lem:balls-into-bins} we get that w.p. $1-\frac{K_i}{n^2}$: $T_i' \leq K_i(1+\floor{C_i})+\frac{K_i(1+\floor{C_i})}{\Theta(\log^{1/2} n)}\leq M_i + \frac{M_i}{\Theta(\log^{1/2} n)}$. 
\end{remark}


\section{Stochastic Online Sorting} 

In this section, we specialize \cref{theAlg} for the classical stochastic online sorting case where $\mathcal{D} = \mathcal{U}(0,1)$ and $\mathcal{S} = [0,1]$. We first, define subroutines \texttt{DomainPartitioning} and $\texttt{InBucketPlacement}$. 

\begin{itemize}
    \item \underline{\texttt{DomainPartitioning}$(\mathcal{U}, [0,1], \ell)$:} For $j \in [2^\ell]$, let: 
    
    $$T^{(1)}_j = [(j-1)/2^\ell, j/2^\ell) \text{ and } T^{(i)}_j = T^{(i-1)}_{2j-1} \cup T^{(i-1)}_{2j}$$

    \item \underline{\texttt{InBucketPlacement}$(x, a)$:} We use the deterministic adversarial algorithm $\mathcal{A}_{adv}$ of \cite{abrSODA} to place elements within each bucket.
\end{itemize}

We are now ready to prove the first main theorem of this work. 

\sorting*

\begin{proof}
We begin by conditioning on the success of our algorithm, which ensures that each bucket in every subarray stores exactly the elements belonging to its designated interval. We distinguish four sources of cost: the \emph{intra-bucket} cost, the \emph{inter-bucket} cost, the \emph{cost of connecting different subarrays} and the cost of the final \emph{subarray $B$}. 
\begin{itemize}
    \item \textbf{Intra-bucket:} Consider subarray $A_i$. Since we use $\mathcal{A}_{adv}$ algorithm for placement in each bucket, we incur total cost for the subarray: $\sum_{j=1}^{K_i}\sqrt{C_i}\cdot \frac{1}{K_i} = \Theta(\log n)$ and since we have $R-1 = O(\log n)$ such subarrays in total we incur total cost $O(\log^{2}n)$.
    
    \item \textbf{Inter-bucket:} Consider subarray $A_i$. Since, by definition, each bucket contains elements strictly smaller than the next, the total cost to "connect them" is at most $\sum_{j=1}^{K_i-1} \left( \frac{j+1}{K_i}-\frac{j-1}{K_i} \right) \leq 2$, thus in total for all subarrays: $O(\log n)$.
    
    \item \textbf{Between subarrays:} Between subarrays we incur cost at most $1$ and since we have $R = O(\log n)$ transitions this cost is $O(\log n)$.
    
    \item \textbf{Subarray $B$:} The elements are inserted in $B$ using $\mathcal{A}_{adv}$ thus we incur total cost $\Theta(\log n)$, since $B$ has $\Theta(\log^{2}n)$ elements. 
\end{itemize}

As a result, by conditioning on the success of our algorithm, we obtain total cost $O(\log^{2}n)$, thus the total cost is $O(\log^{2} n)$ with probability at least $1-1/n$, concluding the proof of the theorem.
 
\end{proof}

\begin{remark}
     Our algorithm is conceptually simple to state and analyze, and benefits from a hierarchical decomposition of Stochastic Online Sorting into balls-into-bins instances. The algorithm and the decomposition above can be naturally extended to non-uniform distributions. The only additional step we need to take care of is to partition the $[0,1]$ interval into as many intervals as required in each phase so that there is equal probability that a new point arrives in each interval. Furthermore, by a more careful analysis and parameter selection, we can show an upper bound of $O(\log^{3/2+\epsilon} n)$ on the total cost of the algorithm \textit{w.h.p.} for Stochastic Online Sorting. Finally, by choosing a sufficiently large constant~$c$ in the proof of \cref{lem:balls-into-bins}, we obtain an even stronger high-probability guarantee, e.g. $1 - 1/n^{100}$. 
\end{remark}


\section{Stochastic Online TSP}

We now extend our approach for stochastic online sorting to the \emph{Stochastic Online TSP} problem, obtained by increasing the dimensionality of the input. We show that \cref{theAlg} can be adapted to preserve the one-dimensional guarantees in arbitrary dimension $d$.

\tsp*

\subsection{\texttt{DomainPartitioning}}

\begin{figure}[t]
    \centering
    \includegraphics[width=0.6\linewidth]{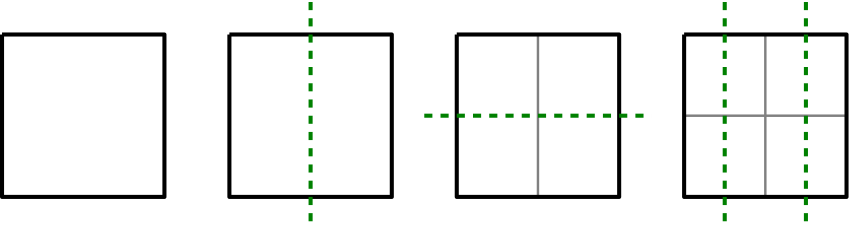}
    \caption{A visual representation of our splitting procedure on the plane. Uniform distribution ensures that splitting a block in half creates two blocks of equal probability mass.}
    \label{fig:blockformation}
\end{figure}

We partition the domain into hyperboxes, which we call \emph{blocks}. Let $\ell$ be the unique positive integer such that $\frac{n}{4\log^{2}n} < 2^\ell \leq \frac{n}{2\log^{2}n}.$ As in the one-dimensional case, the final block covers the entire domain, \textit{i.e.}, $T^{(\ell)}_1 = \mathcal{S}$. To construct the partition of $T^{(i)}$, we apply the split procedure from \Cref{alg:split} to $T^{(i+1)}$. Each split doubles the number of blocks, so after $\ell$ splits, $T^{(1)}$ consists of $2^\ell$ blocks. The splits proceed cyclically over the dimensions: the first along the first dimension, the second along the second, and so forth. After $d$ splits, we return to the first dimension and continue this process until all $\ell$ splits have been performed.

\begin{algorithm}
    \caption{\textsc{Split}$(\mathcal{B}, i)$}\label{alg:split}
    \KwData{Current set of blocks $\mathcal{B}$, splitting dimension $i$}
    \KwResult{Updated set of blocks after splitting}
    $\mathcal{B}' \gets \emptyset$\;
    \For{$b \in \mathcal{B}$}{
        Split $b$ into two equal halves $b_1, b_2$ along dimension $i$\;
        $\mathcal{B}' \gets \mathcal{B}' \cup \{b_1, b_2\}$\;
    }
    \Return{$\mathcal{B}'$}
\end{algorithm}

\subparagraph*{Elementary Blocks.} 
We refer to the blocks $T^{(1)} = \{b_1, b_2, \ldots, b_{2^\ell}\}$ at the lowest level of our hierarchy as \emph{elementary blocks}. Within an elementary block, all points are treated identically by \Cref{theAlg}: by construction, the algorithm’s decisions do not depend on which specific point from the block is presented. Consequently, we may assume that the points drawn within an elementary block $b$ are \textit{i.i.d.}\ samples from the uniform distribution $\mathcal{U}(b)$\footnote{Note that $\mathcal{U}(b)$ is the restriction of the global distribution $\mathcal{D}$ to block $b$.}.

\subparagraph{Block order.} At this stage, we impose an order on the blocks. Consecutive blocks are assigned to consecutive buckets/sub-subarrays. In addition, they are constructed so that they merge early during the execution of our algorithm. This motivates a further property: consecutive blocks must be neighbours, \textit{i.e.}, they share a face (or, more generally, a hyperface). We present \Cref{alg:findorder}, which traverses the blocks in such an order, and establish its correctness in \Cref{lem:algcor} which is formally proven in the Appendix.

\begin{algorithm}
    \caption{\textsc{Order}$(n_1,\dots,n_d)$}\label{alg:findorder}
    \KwData{Number of blocks $n_i$ along each dimension $i \in [d]$}
    \KwResult{Traversal order of all blocks}
    $v \gets (1,1,\dots,1)$ \Comment*[r]{coordinates of the first block}
    $L \gets [v]$ \Comment*[r]{list of visited block coordinates}
    $m \gets (1,1,\dots,1)$ \Comment*[r]{move direction in each dimension}

    \While{$\neg(v[1]=1 \wedge \dots \wedge v[d-1]=1 \wedge v[d]=n_d)$}{
        $j \gets 1$\;
        \While{$v[j] + m[j] \notin [1,n_j]$}{
            $m[j] \gets -m[j]$ \Comment*[r]{reverse direction in dim.\ $j$}
            $j \gets j+1$\;
        }
        $v[j] \gets v[j] + m[j]$, Append $v$ to $L$\;
    }
    \Return{$L$}\;
\end{algorithm}

\begin{lemma}\label{lem:algcor}
    \Cref{alg:findorder} visits every block exactly once, and any two consecutive blocks in the order are adjacent.
\end{lemma}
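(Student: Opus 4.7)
My plan is to prove the adjacency claim directly from the algorithm's structure and the coverage claim by induction on the dimension $d$, using a slightly strengthened inductive hypothesis.

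For adjacency, observe that each iteration of the outer \texttt{while} loop exits the inner \texttt{while} only when a valid dimension $j$ is found, and then executes exactly one assignment $v[j] \gets v[j] + m[j]$. Since entries of $m$ are initialised to $1$ and only ever negated, $m[j] \in \{-1, +1\}$ throughout, so consecutive positions $v$ differ by $\pm 1$ in exactly one coordinate. Hence the two corresponding blocks share a $(d-1)$-dimensional face, which is the required notion of adjacency.

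For coverage, I would induct on $d$, showing that the algorithm visits each of the $\prod_i n_i$ cells of $[n_1]\times\cdots\times[n_d]$ exactly once and terminates precisely at $[1,\ldots,1,n_d]$. The base case $d=1$ is immediate: starting from $v=[1]$, $m=[1]$, the algorithm walks $v[1]$ from $1$ to $n_1$ and then exits. For the inductive step, I would decompose the execution into $n_d$ \emph{layers} indexed by $v[d]$. Inside a layer, the inner \texttt{while} reaches dimension $d$ only when dimensions $1,\ldots,d-1$ are simultaneously blocked, so within a layer the algorithm behaves exactly like the $(d-1)$-dimensional algorithm (without its own termination check) started at the layer's entry corner with the current inner direction vector. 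Between two consecutive layers, each of $m[1],\ldots,m[d-1]$ has just been flipped (each was blocked immediately before the dimension-$d$ move), so the next layer is entered at the same inner corner but with all inner directions reversed, which sweeps the new layer in reverse and ends at the opposite inner corner.

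The main technical obstacle will be stating the strengthened inductive hypothesis cleanly, namely that the algorithm run from \emph{any} inner corner $v_0$ with an inward-pointing direction vector sweeps the entire $(d-1)$-dimensional subgrid and ends at the unique opposite inner corner with all inner directions again reversed. Once this is in hand, the $d$-dimensional analysis becomes a parity argument across the $n_d$ layers: because the splitting procedure guarantees each $n_i$ is a power of $2$ (hence either $1$ or even), after $n_d$ layers the inner coordinates have returned to $[1,\ldots,1]$ while $v[d]=n_d$, triggering the outer termination condition exactly when all $\prod_i n_i$ cells have been visited.
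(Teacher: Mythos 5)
Your adjacency argument matches the paper's: each outer-loop iteration changes exactly one coordinate of $v$ by $\pm 1$, so consecutive blocks share a face. For coverage you take a genuinely different route. The paper writes down an explicit closed-form expression for the position of block $(v_1,\ldots,v_d)$ in the visit order, built from the same parity observation you use (that each $n_i$ is a power of $2$, hence even since $d\le\ell$), and argues that this is a bijection onto $\{1,\ldots,\prod_i n_i\}$. You instead induct on dimension via a layer decomposition and a time-reversal argument, which is structurally cleaner and avoids the algebraic bookkeeping (the paper's printed formula in fact appears to have a typo in how the parity indicator enters). However, your strengthened inductive hypothesis is not quite right as stated: a $(d-1)$-dimensional sweep started at $(1,\ldots,1)$ with $m=(1,\ldots,1)$ ends at $(1,\ldots,1,n_{d-1})$, which agrees with the start in all but the last inner coordinate, and relative to the start only $m[d-1]$ is reversed (the flips of $m[1],\ldots,m[d-2]$ cancel in pairs because each $n_j$ is even) — so the ending corner is not the antipodal one and the directions are not ``all reversed.'' What your argument actually needs is weaker and cleaner: (i) a sweep from any corner with an inward-pointing direction visits every cell exactly once and halts at a corner whose post-flip direction again points inward, and (ii) the layer transition negates every inner direction relative to the \emph{end} of the previous layer, so consecutive layers are exact time-reversals. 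Two layers then compose to the identity on the inner state, and $n_d$ even forces a return to $(1,\ldots,1)$ precisely when $v[d]=n_d$, so the outer termination check fires only after all cells are visited. With this correction your induction goes through, and in my view it is easier to make airtight than the paper's explicit-order-formula approach.
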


\subsection{\texttt{InBucketPlacement}}

Once mapped to a bucket, an element is placed in an empty cell using Bertram’s adversarial algorithm \cite{bertESA}.


\subsection{Cost Analysis}

Similarly to the stochastic online sorting case, we distinguish three sources of cost: the \emph{intra-bucket} cost, the \emph{inter-bucket} cost, and the \emph{cost of connecting different subarrays}. Since absolute values are not informative, all bounds will be expressed in terms of $\OPT$. Before estimating $\OPT$ and analyzing each cost source separately, we recall several key results that will be used throughout the analysis.

\begin{theorem}[\cite{Beardwood_Halton_Hammersley_1959}]
    Given sufficiently large $n$, the expected length of a TSP tour with $n$ points drawn i.i.d. from $\mathcal{U}([0,\Delta]^d)$ is $\beta_d\cdot n^{1-1/d}\cdot \Delta$, where $\beta_d\approx\sqrt{\frac{d}{2\pi e}}$. We denote this quantity by $\text{TSP}(n,d,\Delta)$. 
\end{theorem} 

\begin{corollary}\label{exOPT}
    Since $\OPT$ represents the length of the optimal tour of $n$ points drawn i.i.d. from $\mathcal{U}([0,1]^d)$ it is $\E{\OPT} = \text{TSP}(n,d,1) = \beta_d \cdot n^{1 - 1/d},$ where the randomness is over the drawn instance.
\end{corollary}

\begin{proposition}
The following properties hold:
\begin{enumerate}
    \item For all $m<n$ and all $d,\Delta$, we have $\text{TSP}(m,d,\Delta) < \text{TSP}(n,d,\Delta)$.
    
    \item For any convex $S \subseteq [0,\Delta]^d$, let $tour(S,n)$ denote the expected length of the optimal TSP tour on $n$ points drawn uniformly i.i.d.\ in $S$. Then, $tour(S,n) \;\leq\; \text{TSP}(n,d, \max_{x,y \in S}  \|x- y\|_{\infty}) \;\leq\; \text{TSP}(n,d,\Delta).$
\end{enumerate}
\end{proposition}

\begin{lemma}[Chapter 2, \cite{steele}]\label{conc}
Let $T$ be the length of the TSP of $n$ points drawn i.i.d. in $[0, \Delta]^d$. It holds for some constant $c$:

$$\Prob{\abs{T-\text{TSP}(n,d,\Delta)}\geq t}\leq 2\exp \left(-\frac{ct^2}{n^{1-2/d}\Delta^2}\right)$$
\end{lemma}

\subparagraph*{Estimating the Optimal Cost.}

In contrast to the one-dimensional case, there is no simple closed-form expression for the optimal cost in higher dimensions. We therefore rely on asymptotic estimates and concentration bounds to control the value of $\OPT$.

\begin{lemma}\label{lem:OPT}
With probability at least $1 - 2\exp(-c' d n)$, the optimal TSP tour satisfies

$$
    \OPT \;\ge\; \tfrac{1}{2}\,\beta_d\, n^{1-1/d}.
$$
\end{lemma}

\begin{figure}[t]
    \centering
    \includegraphics[width=0.4\linewidth]{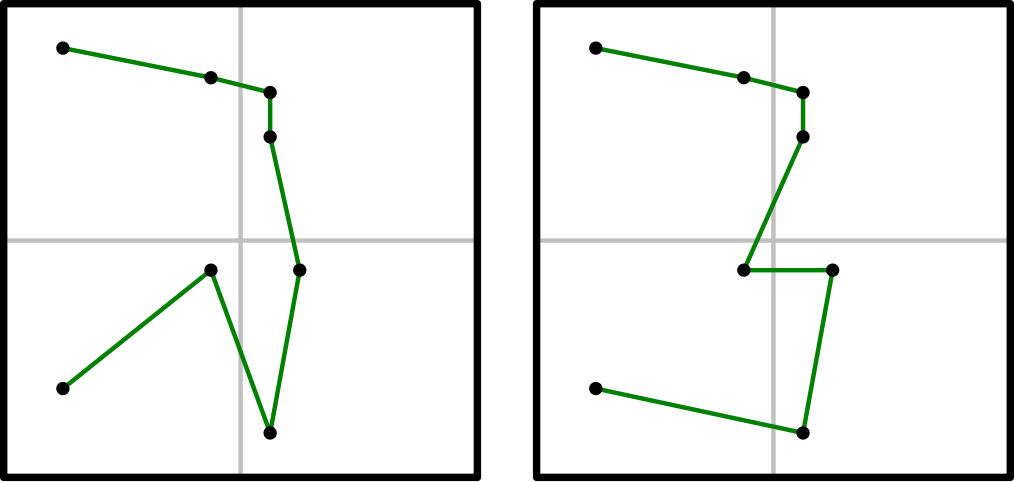}
    \caption{The domain $[0,1] \times [0,1]$ is partitioned into four blocks. 
The left panel shows the block-by-block tour produced by our algorithm, 
while the right panel shows the optimal TSP tour on the same instance. 
Although the two tours may differ, our partitioning ensures that their lengths 
remain close with high probability.}

    \label{fig:optpath}
\end{figure}

We now proceed to the analysis of the three sources of cost. 

\subparagraph*{Intra-Bucket Cost.}

\begin{lemma}\label{oneblock}
Let $\OPT_i$ be the length of the optimal tour of $6 \log^2 n$ points drawn uniformly at random from elementary block $b_i$. Then, 

$$\Prob{\OPT_i \,\geq\, 72 \beta_d\, n^{-1/d}\, \log^2 n} \;\leq\; \frac{2}{n^2}.$$
\end{lemma}

\begin{proof}
\textbf{Step 1: Block side length.}  
Since the domain is split $\ell$ times, each dimension is split at least $\lfloor \ell/d \rfloor$ times\footnote{We assume that $\ell \geq d$. In the \Cref{ellgeqd} we remove this assumption.}, thus the maximum side length of a block is at most $(1/2)^{\lfloor \ell/d \rfloor}$. As $\ell$ and $d$ are integers, $(1/2)^{\lfloor \ell/d \rfloor}   \;\leq\; (1/2)^{(\ell-d)/d} \;<\; \Bigl(\tfrac{4\log^2 n}{n}\Bigr)^{1/d}\cdot 2 \;\leq\; 4\Bigl(\tfrac{\log^2 n}{n}\Bigr)^{1/d}.$ Hence, each side of $b_i$ has length at most $4(\tfrac{\log^2 n}{n})^{1/d}$.

\noindent\textbf{Step 2: Expected cost.}  The expected length of the optimal subtour inside $b_i$ satisfies:

$$
    \E{\OPT_i} \;\leq\; \text{TSP}(6\log^2 n,d,\Delta)
    \;\leq\; \beta_d (6\log^2 n)^{1-1/d}\, \Delta,
$$
where $\Delta$ is the side length of $b_i$. Substituting $\Delta \leq 4(\tfrac{\log^2 n}{n})^{1/d}$ gives:

$$
    \E{\OPT_i} \;\leq\; 24\beta_d\, n^{-1/d}\, \log^2 n.
$$
\textbf{Step 3: Concentration.}  
Applying \Cref{conc} with $t = 48\beta_d\, n^{-1/d}\, \log^2 n$, we bound the probability that $\OPT_i$ exceeds three times its expectation:

$$
    \Prob{\OPT_i \geq 72\beta_d\, n^{-1/d}\, \log^2 n}
    \;\leq\; 2\exp\!\left(- \frac{c'' d\, n^{-2/d}\, (\log^2 n)^2 \cdot n^{2/d}}{\log^2 n}\right) \;\leq\; \frac{2}{n^2}.
$$
\end{proof}

\begin{corollary}\label{cor:blockcost}
With probability at least $1 - \tfrac{1}{n\log^2 n}$, every one of the $2^\ell \leq \tfrac{n}{2\log^2 n}$ elementary block tours has length at most $72\beta_d\, n^{-1/d}\, \log^2 n.$
\end{corollary}

So far, we proved that the length of the optimal tour of $6 \log^2 n$ points uniformly chosen in a block is bounded by $72\beta_d\, n^{-1/d}\, \log^2 n$ with probability at least $1 - 2/n^2$. We want to transition from this to a bound of the optimal total intra bucket cost of a subarray $j$, denoted by $\textsc{InBucket}(j)$. For the first subarray, elementary blocks are mapped one to one to buckets and every bucket contains strictly less than $6 \log^2n$ points therefore \Cref{cor:blockcost} implies that $\textsc{InBucket}(1) \le 72\beta_d\, n^{1-1/d}$ with probability at least $1 - \tfrac{1}{n\log^2 n}$. We now show that any subsequent (fixed) subarray enjoys a comparable bound.

\begin{lemma}\label{lem:merge-dominance}
Fix an arbitrary phase $j$ and an arbitrary block $B_k^{(j)}$ of phase $j$. Block $B_k^{(j)}$ consists of some elementary blocks $b_{k_1}, \ldots, b_{k_m}$. Fix any realization of points and let $r(B)$ denoted the realized points inside block $B$,

$$
\textsc{InBucket}(j) = \sum_{i=1}^{{2^{\ell - j + 1}}} \OPT(r(B^{(j)}_i)) \;\le\; \sum_{i=1}^{2^\ell} \OPT(r(b_i)) \;+\; \beta_d\, n^{1-1/d}.
$$
where $\OPT(S)$ denotes the length of the optimal tour of the points in set $S$.
\end{lemma}

\begin{proof}
A feasible tour inside $r(B^{(j)}_k)$ is obtained by concatenating the optimal tours of $r(b_{k_1}),\dots,r(b_{k_m})$ and adding at most $(r-1)$ connecting edges between adjacent elementary blocks. Summed over all phase $j$ blocks, the total number of such connectors is at most the number of elementary blocks. By \Cref{lem:connect-blocks}, each connector has length at most $2 \beta_d \; n^{-1/d}\log^2 n$, and by \Cref{cor:subarray} the total intra-subarray connection cost is deterministically bounded by $\beta_d\, n^{1-1/d}$. This yields the claimed inequality.
\end{proof}

At this point we restate \cref{lem:caplim} slightly changing the phrasing to fit current context. 

\begin{lemma}
    For every subarray $j$, each bucket has capacity less than $6 \log^2 n$ points.
\end{lemma}

\begin{corollary}\label{pointlim}
    Each subarray gets at most $6 \log^2 n$ points from any elementary block.
\end{corollary}

\begin{lemma}\label{dominion}
    For any $t \in \mathbb{R}$ and any subarray $j$, it holds $\Prob{\OPT_i \geq t} \geq \Prob{\OPT(r(b_i)) \geq t}$.
\end{lemma}

\begin{proof}
The cardinality of $r(b_i)$ is a random variable. However, by \Cref{pointlim}, $r(b_i)$ contains fewer than $6 \log^2(n)$ points. In both cases, the points are independently drawn uniformly at random within the same block. The proof then follows from the observation that the probability of the optimal tour length of $x$ i.i.d.\ points in some space $S$ exceeding a threshold $t$ is non-decreasing in the number of points $x$.
\end{proof}

\begin{proposition}\label{prop:phase-symmetry}
Fix any subarray $j$. With probability at least $1 - \tfrac{1}{n\log^2 n}$, it is $\textsc{InBucket}(j) \leq 73\beta_d\, n^{1-1/d}$.
\end{proposition}

\begin{proof}
\cref{dominion} combined with \cref{oneblock} implies $\Prob{\OPT(r(b_i)) \geq 72\beta_dn^{-1/d}\log^2n} \le \frac{2}{n^2}$. Taking a union bound over different elementary blocks and adding the connecting term completes the proof.
\end{proof}

\begin{corollary}
    By taking a union bound over the subarrays we get that, with probability at least $1 - \frac{1}{n \log n}$, for every subarray $j$ it holds $\textsc{InBucket}(j) \leq 146 \OPT$.
\end{corollary}

\subparagraph*{Inter-Bucket Cost.}

\begin{lemma}\label{lem:connect-blocks}
The cost of connecting two neighbouring blocks is at most 

$$
    2\sqrt{d}\cdot 4\Bigl(\tfrac{\log^{2}n}{n}\Bigr)^{1/d}
    \;\leq\; 2\beta_d\, n^{-1/d}\, \log^{2}n.
$$
\end{lemma}

\begin{remark}
Lemma~\ref{lem:connect-blocks} applies to elementary blocks. When moving to higher phases, blocks are formed by merging pairs of blocks from the previous phase. As a result, the diameter of a block (and hence the cost of connecting two neighboring blocks) increases by at most a factor strictly smaller than $2$ from one phase to the next. However, at the same time, the number of blocks to be connected decreases by a factor of $2$ at each phase.  Therefore, the total cost incurred by connecting all blocks within a subarray remains of the same order, which leads to Corollary~\ref{cor:subarray}.
\end{remark}

\begin{corollary}\label{cor:subarray}
Using the observation above, the cost of connecting all blocks within a subarray is deterministically bounded by $\beta_d \cdot n^{1 - 1/d}$.
\end{corollary}

\subparagraph*{Cost of Connecting Subarrays.}

\begin{lemma}\label{lem:between-subarrays}
The cost of connecting two subarrays is at most $\OPT$.
\end{lemma}

\begin{proof}
Connecting subarrays requires adding a single edge between two points, one from each subarray. Since the optimal tour $\OPT$ already spans all points in the domain, the additional cost of this connection is trivially bounded above by $\OPT$.
\end{proof}

\subparagraph{Putting Everything Together.}

We are now ready to prove \Cref{thm:2}. We decompose the total cost into four components:

\begin{itemize}
    \item \textbf{Intra-bucket:} 
    For each subarray $A_j$, placements inside buckets are handled by Bertram’s algorithm \cite{bertESA}. This incurs cost $\sum_{i=1}^{K_j} \sqrt{C_j}\cdot \OPT(r(B_i)) \leq \Theta(\log n) \sum_{i=1}^{2^{\ell}} \OPT_i = \Theta(\log n)\OPT$. where the first inequality is derived from \Cref{lem:merge-dominance} and the second from \Cref{oneblock}. Since there are $R = O(\log n)$ subarrays, the total bucket cost is $O(\log^2 n)\OPT$.

    \item \textbf{Inter-bucket:} 
    By \Cref{lem:connect-blocks,cor:subarray}, the additional cost of connecting all buckets within a subarray $A_i$ is at most $O(\OPT)$. Summing over all $k = O(\log n)$ subarrays yields a total of $O(\log n)\OPT$.

    \item \textbf{Between subarrays:}  
    Connecting consecutive subarrays requires at most one additional edge per transition. By \Cref{lem:between-subarrays}, this incurs cost at most $\OPT$ per transition. With $k = O(\log n)$ subarrays, the total cost is $O(\log n)\OPT$.

    \item \textbf{Subarray $B$:}  
    The remaining elements are placed in $B$ using Bertram’s algorithm \cite{bertESA}, incurring total cost $O(\log n)\OPT$, since $|B| = O(\log^2 n)$.
\end{itemize}

Conditioning on the high-probability event that \Cref{cor:blockcost} holds for all subarrays $A_i$ and conditioning on the event that the algorithm does not fail, the total cost of the algorithm is $O(\log^2 n)\OPT$. This completes the proof of \Cref{thm:2}.

\section{Conclusion and Open Directions}

In this work, we study the stochastic variants of Online Sorting and Online TSP, obtaining polylogarithmic upper bounds in both settings. Analyzing the problem through the lens of the balls-into-bins paradigm reveals exponential improvements over previous approaches. A natural direction is to extend our bounds to general known distributions in higher dimensions. At the same time, our framework also assumes full knowledge of the distribution. Relaxing this assumption prompts several questions: What guarantees are possible when points are drawn \textit{i.i.d.} from an \emph{unknown} distribution, or when inputs arrive in uniformly random order with no distributional assumptions?



\bibliographystyle{alpha}
\bibliography{main.bib}

\appendix

\section{Full Pseudocode}

\begin{algorithm}[H]
    \caption{General Algorithmic Framework - Detailed}\label{theAlgFull} 
    \KwData{Array $A[1:n]$, Distribution $\mathcal{D}$ over a domain $\mathcal{S}$, subroutines \texttt{receive\_sample, DomainPartitioning, InBucketPlacement, index, empty}}
    \KwResult{\texttt{success} (placement of $n$ samples from $\mathcal{D}$ into $A$) or \texttt{fail}}
    $i \gets 0$ \Comment*[r]{the current phase of the algorithm}
    $\texttt{ptr} \gets 0$ \Comment*[r]{the index just before the next subarray to allocate}
    $\ell \gets \floor{\log \lp( \frac{n}{2\log^{2}n}\rp)}$, $K \gets 2^\ell$\Comment*[r]{$K$ is the initial number of buckets}  
    $\mathcal{T} \gets \texttt{DomainPartitioning}\lp(\mathcal{D}, \mathcal{S}, \ell \rp)$ \Comment*[r]{the disjoint subset binary tree}  
    \While{$A$ is not full}{
        $i \gets i+1$, $K_i = \frac{K}{2^{i-1}}$; \\
        $A_i \gets A\lp[\texttt{ptr}+1 : \texttt{ptr}+1 + \floor{\frac{n}{2^i}}\rp]$; \\
        $\texttt{idx}\gets 0$, $C_i\gets \floor{\frac{A_{i}+\texttt{empty}(A_{i-1})}{K_i}}$, $m = A_{i}+\texttt{empty}(A_{i-1}) \mod K_i$; \\
        \For{$j\in [K_i]$}{
            \If{$j \leq m$}{$\widetilde{C}_i = C_i+1$}
            \Else{$\widetilde{C}_i = C_i$}
            $c^{(i)}_j = \widetilde{C}_i-\texttt{empty}\lp(a^{(i-1)}_{2j-1}\rp)-\texttt{empty}\lp(a^{(i-1)}_{2j}\rp)$ \\
            \If{$c^{(i)}_j \leq 0$}{\textbf{return} \texttt{fail}} \label{line:fail1}
            $a^{(i)}_j = A_i[\texttt{idx}+1:\texttt{idx}+1+c^{(i)}_j]$; \\ 
            $\texttt{idx} \gets \texttt{idx} + c^{(i)}_j$
        }
        \If{$n-\texttt{ptr} \leq 100\log^2n$}{$K_i = 1$, $A_i = a^{(i)}_1 = A[\texttt{ptr+1}:n]$ \Comment*[r]{the last phase of $\ALG$ }}
        \While{$\forall j \in [K_i],\ a^{(i)}_j$ \text{is not full}}{
            $x\gets \texttt{receive\_sample}(\mathcal{D})$; \\
            $k_1 \gets \texttt{index}(\mathcal{T}, i-1, x)$, $k_2 \gets \texttt{index}(\mathcal{T}, i, x)$; \\
            \eIf{$a^{(i-1)}_{k_1}$ is not full}
            {$\texttt{InBucketPlacement}\lp(x, a^{(i-1)}_{k_1}\rp)$;}
            {$\texttt{InBucketPlacement}\lp(x, a^{(i)}_{k_2}\rp)$;}
            
        }
        \If{$A_{i-1}$ is not full}{\textbf{return} \texttt{fail}} \label{line:fail2}
        $\texttt{ptr} \gets \texttt{ptr} + A_i$; \\ 
    }
    \textbf{return} \texttt{success}
\end{algorithm}

\section{Deferred Proofs}

\subsection{Proof of Lemma~\ref{lem:caplim}}

\begin{proof}
For $C_i = \frac{\texttt{empty}(A_{i-1})+A_i}{K_i} = \frac{N_{i-1}+A_i}{K_i}$ it holds that 
$$\frac{A_i}{K_i} \leq \frac{N_{i-1}+A_i}{K_i} \leq \frac{A_{i-1}+A_i}{K_i} \leq \frac{3A_i}{K_i}$$ and since 

$$\frac{A_i}{K_i} = \frac{\floor{n/2^i}}{2^{\ell-i+1}} \in [0.5\log^2 n,  2\log^2 n],$$ 
we get that $C_i \in  [0.5\log^2 n,  6\log^2 n]$ thus proving the lemma.     
\end{proof}

\subsection{Proof of Lemma~\ref{lem:balls-into-bins}}

\begin{proof} 
We will use standard Chernoff bounds along with a union bound over bad events. For each $i \in [K]$, let $L_i^{(t)}$ denote the random variable representing the number of balls in the $i$-th bin after $t$ balls have been thrown. Note that $\mathbb{E}[L_i^{(t)}] = t/K$. Also, let $X_i^{(t)}\sim \text{Bernoulli}(1/K)$ be the indicator random variable that is equal to $1$ if and only if the $t$-th ball lands in bin $i$. Thus, $L_i^{(t)} = \sum_{j \in [t]} X_{i}^{(j)}$, where $X_{i}^{(j)}$ are independent. First, to bound the time of the first overflow, we apply a standard upper tail Chernoff bound. After throwing $T = M - c\cdot \frac{M}{\log^{1/2} n} = M -\frac{M}{\Theta(\log^{1/2} n)}$ balls for some positive constant $c$, we obtain that ($c$ is not absorbed into $\Theta(\cdot)$ term):

$$ \Prob{L_i^{(T)} \geq C} \leq \Exp{-\frac{\prn{\frac{M-T}{T}}^2\cdot \frac{T}{K}}{\frac{M-T}{T} + 2}} = \Exp{-\frac{(M-T)^2C}{(M+T)M}} \leq \Exp{-\frac{c^2}{2}\Theta(\log n)} \leq \frac{1}{n^2} $$
for some choice of constant $c$, depending on $C = \Theta(\log^{2} n)$. Taking a union bound over all $K$ bins, we get:

$$ \Prob{\exists i \in [K] : L_i^{(T)} \geq C} \leq \frac{K}{n^2}.$$

This shows that with probability at least $1 - \frac{K}{n^2}$, no bin has exhibited overflow after $T = M - o(M)$ balls have been thrown, completing the first part of the lemma.

For the second part of the lemma, we want to show that after throwing  $T' = M + c\cdot \frac{M}{\log^{1/2} n} = M + \frac{M}{\Theta(\log^{1/2} n)}$  balls for some positive constant $c$, all bins are full with probability at least $1-\frac{K}{n^2}$. From a standard lower tail Chernoff bound we obtain that (again $c$ is not absorbed into $\Theta(\cdot)$ term):

$$ \Prob{L_i^{(T')} < C} \leq \Exp{-\frac{\prn{\frac{T'-M}{T'}}^2\cdot \frac{T'}{K}}{2}} = \Exp{-\frac{(T'-M)^2C}{2T'M}} \leq \Exp{-\frac{c^2}{4}\Theta(\log n)} \leq \frac{1}{n^2} $$
for some choice of constant $c$, depending on $C = \Theta(\log^{2} n)$. Taking a union bound over all $K$ bins, we get:

$$\Prob{\exists i \in [K] : L_i^{(T')} < C} \leq \frac{K}{n^2}, $$
finishing the proof. 
\end{proof}

\subsection{Proof of Lemma~\ref{lem:fail1}}

\begin{proof}
    Fix a phase $i$. A new phase $i$ has just started, thus one bin of phase $i-1$ has overflowed. Since we condition on $\mathcal{O}_{i-1}$, $M_{i-1} - \frac{M_{i-1}}{\Theta(\log^{1/2} n)}$ balls have been thrown in this instance. Thus, due to the conditioning on $\mathcal{G}_{i-1}$, we also get that each bin of the instance has received at least $C_{i-1}-\frac{C_{i-1}}{\Theta(\log^{1/3} n)}$ balls. Since, the number of empty cells in each sub-subarray of $A_{i-1}$ is less than the remaining capacity of the bin we get that $\forall j,\ \texttt{empty}\lp(a^{(i-1)}_{j}\rp) < \frac{C_{i-1}}{\Theta(\log^{1/3} n)}$. Thus, we get that: 

    $$ \textnormal{\texttt{empty}}\lp(a^{(i-1)}_{2j-1}\rp)+\textnormal{\texttt{empty}}\lp(a^{(i-1)}_{2j}\rp) < 2\frac{C_{i-1}}{\Theta(\log^{1/3} n)} < C_i,$$
    since $C_i = \Theta(\log^2 n)$ for each phase $i$, finishing the proof. 
\end{proof}

\subsection{Proof of Lemma~\ref{lem:E}}

\begin{proof}
    For the events $\mathcal{O}_i, \mathcal{F}_i$,  we get immediately from \Cref{lem:balls-into-bins} that they occur with probability at least $1-{K_i}/{n^2}$. For the events $\mathcal{G}_i$, we use the second part of \Cref{lem:balls-into-bins} for $C = C_i-\frac{C_i}{\Theta(\log^{1/3} n)}$ and $K = K_i$. Specifically, since: 
    
    $$M + \frac{M}{\Theta(\log^{1/2} n)} = K_i\prn{C_i-\frac{C_i}{\Theta(\log^{1/3} n)}} + \frac{K_i\prn{C_i-\frac{C_i}{\Theta(\log^{1/3} n)}}}{\Theta(\log^{1/2}n)} < M_i-\frac{M_i}{\Theta(\log^{1/2} n)},$$
    after $M + \frac{M}{\Theta(\log^{1/2} n)} < M_i-\frac{M_i}{\Theta(\log^{1/2} n)}$ balls are thrown each bin is full, thus has at least $C = C_i-\frac{C_i}{\Theta(\log^{1/3} n)}$ with probability at least $1-K_i/n^2$. Thus, for $\mathcal{E}$ it holds that:
    
    $$\Prob{\neg\mathcal{E}} \leq \sum_i \Prob{\neg \mathcal{O}_i}+\Prob{\neg \mathcal{F}_i}+\Prob{\neg \mathcal{G}_i} \leq 3\sum_i \frac{K_i}{n^2} < \frac{1}{n},$$
    for sufficiently large $n$.
\end{proof}

\subsection{Proof of Lemma~\ref{lem:algcor}}

\begin{proof}
    Let a block have coordinates $(v_1,\dots,v_d)$, and let $n_i$ denote the number of blocks along dimension $i$. The traversal proceeds dimension by dimension in a nested fashion: to advance one step in dimension $i$, the algorithm requires a complete traversal of the $(i-1)$ preceding dimensions. Thus, the position of $(v_1,\dots,v_d)$ in the order can be expressed explicitly as
    \[
        o(v_1,\dots,v_d) \;=\; 1 + \sum_{i=1}^d \Big( \sigma_i(v_{i+1},\dots,v_d) \cdot (v_i-1) + (1-\sigma_i(v_{i+1},\dots,v_d)) \cdot (n_i-v_i) \Big) \cdot \prod_{j < i} n_j ,
    \]
    where $\sigma_i(\cdot) \in \{0,1\}$ encodes the direction of traversal in dimension $i$, which depends only on the parity of coordinates in higher dimensions. Since each $n_i$ is a power of two, this direction-flipping rule is consistent throughout the traversal.

    For adjacency, note that the algorithm changes exactly one coordinate of $v$ by $\pm 1$ at each step, never moving outside the boundary of $[n_1]\times\cdots\times[n_d]$. Hence consecutive blocks share a face (or hyperface). The only exception is the first and last blocks, which are opposite corners, and thus not consecutive in the traversal. Therefore, \Cref{alg:findorder} is correct.
\end{proof}

\subsection{Proof of Lemma~\ref{lem:OPT}}

\begin{proof}
By \Cref{conc}, the random variable $\OPT$ is sharply concentrated around its mean $\mathbb{E}[\OPT]$. By \Cref{exOPT}, we have $\mathbb{E}[\OPT] = (1+o(1))\beta_d n^{1-1/d}$. Setting $t = \tfrac{1}{2}\beta_d n^{1-1/d}$ in \Cref{conc} yields

$$
    \Prob{\OPT \le \mathbb{E}[\OPT] - \tfrac{1}{2}\beta_d n^{1-1/d}}
    \;\le\; 2 \exp\!\left(-\frac{c\,(\tfrac{1}{2}\beta_d n^{1-1/d})^2}{n^{1-2/d}}\right).
$$
Simplifying gives

$$
    \Prob{\OPT \le \tfrac{1}{2}\beta_d n^{1-1/d}}
    \;\le\; 2 \exp(-c' d n)
$$
for a suitable constant $c' > 0$. This proves the claim.
\end{proof}

\subsection{Proof of Lemma~\ref{lem:connect-blocks}}

\begin{proof}
Since consecutive blocks are neighbors, the connecting edge spans at most twice the diagonal of a block. The diagonal of a block is bounded by its side length times $\sqrt{d}$. From the partition analysis, the side length is at most $4(\tfrac{\log^{2}n}{n})^{1/d}$. Hence the connecting edge has length at most $2\sqrt{d}\cdot 4\Bigl(\tfrac{\log^{2}n}{n}\Bigr)^{1/d}.$ This is at most $2\beta_d\, n^{-1/d}\, \log^{2}n$, completing the proof.
\end{proof}

\subsection{More than $\ell$ Dimensions}\label{ellgeqd}

In the previous part of this section we analyzed the case $d < \ell$. We now justify this restriction by showing that when $d \geq \ell$, even the trivial strategy of visiting the points in arrival order yields a tour whose length is within a constant factor of $\OPT$. Thus, no sophisticated partitioning is required in high dimensions.

\begin{lemma}
When $d \geq \ell$ and the points are drawn i.i.d.\ from the uniform distribution on $[0,1]^d$, the tour that visits the points in arrival order has length $O(\OPT)$ with probability at least $1-1/n$.
\end{lemma}

\begin{proof}
Let $C_n := \sum_{i=1}^{n} \|x_i - x_{i+1}\|,$ where $x_{n+1}=x_1,$ denote the cost of the arrival-order tour. Since the points are i.i.d., its expected cost is

$$
    \E{C_n} = n \cdot \mu_d, 
$$
where $\mu_d := \E{\|X-Y\|}$ for $X,Y \sim \mathcal{U}([0,1]^d)$ independent.  
Anderssen et al.~\cite{anderssen} showed that

$$
    \tfrac{\sqrt{d}}{3} \;\leq\; \mu_d \;\leq\; \sqrt{\tfrac{d}{6}} .
$$
Hence

$$
    n \cdot \tfrac{\sqrt{d}}{3} \;\leq\; \E{C_n} \;\leq\; n \cdot \sqrt{\tfrac{d}{6}} .
$$

\medskip\noindent
\emph{Concentration.}  
Changing a single point $x_i$ affects at most two terms of $C_n$, and each term is at most $\sqrt{d}$. Thus, the Lipschitz constant is $2\sqrt{d}$. McDiarmid’s inequality gives

$$
    \Prob{|C_n - \E{C_n}| > \varepsilon} \;\leq\; 2\exp\!\left(-\frac{\varepsilon^2}{2 d n}\right).
$$
Setting $\varepsilon = n \sqrt{d/6}$ shows that with probability at least $1-2e^{-n/12}$,

$$
    C_n \;\leq\; (1+\sqrt{3/2})\, \E{C_n} \;\leq\; O(n \sqrt{d}).
$$

\emph{Comparison to $\OPT$.}  
By \Cref{lem:OPT}, with probability at least $1-2e^{-c' d n}$ we have

$$
    \OPT \;\geq\; \tfrac{1}{2}\,\beta_d\, n^{1-1/d}.
$$

Therefore, with overall probability at least $1-(2e^{-n/12}+2e^{-c'dn})$,

$$
    \frac{C_n}{\OPT}
    \;\leq\; \frac{(1+\sqrt{3/2}) \, n\sqrt{d/6}}{(1/2)\,\beta_d\, n^{1-1/d}}
    \;=\; c \cdot n^{1/d}.
$$

Since $d \geq \ell$, this ratio is $O(n^{1/\ell}) = O(1)$, which proves the lemma.
\end{proof}

\end{document}